\documentclass[aps,twocolumn,superscriptaddress,dvipsnames,nofootinbib,amssymb,floatfix,10pt,prx]{revtex4-2} 
\usepackage{graphicx}
\usepackage{dcolumn}
\usepackage{bm}
\usepackage{qcircuit}
\usepackage{braket}
\usepackage{color}
\usepackage{amsthm}
\usepackage{amsmath}
\usepackage{chemarrow}
\usepackage{overpic}
\usepackage{diagbox}
\usepackage{braket}
\usepackage{subfigure}
\usepackage{makecell}
\usepackage{tikz}
\newtheorem{theorem}{Theorem}

\newtheorem{lemma}{Lemma}

\newtheorem{prop}{Proposition}
\usepackage{chngcntr}
\usepackage[colorlinks=true,linkcolor=red,bookmarks=true,breaklinks=true,citecolor=blue]{hyperref}
\usepackage{cleveref}

\renewcommand{\addcontentsline}[3]{}
\newcommand{\Tr}{\mathrm{Tr}}
\newcommand{\re}{\text{Re}}
\newcommand{\im}{\text{Im}}

\begin{document}
\title{Hamiltonian dynamics from pure dissipation}
\author{Zhong-Xia Shang}
\email{zhongxia.shang@math.ku.dk}
\affiliation{Department of Mathematical Sciences, University of Copenhagen, Universitetsparken 5, 2100 Denmark}
\author{Daniel Stilck França}
\email{dsfranca@math.ku.dk}
\affiliation{Department of Mathematical Sciences, University of Copenhagen, Universitetsparken 5, 2100 Denmark}

\begin{abstract}
\noindent 
The fundamental difference between closed and open quantum dynamics lies in their environmental interaction: closed systems are perfectly isolated and evolve reversibly under unitary Hamiltonian dynamics, whereas open systems continuously couple to an external bath, resulting in irreversible dissipation and information loss. In this work, we show internal Hamiltonian dynamics can be "faked`` via external pure dissipation, i.e., Lindbladians without a coherent Hamiltonian part. More concretely, we show that, in a GKSL representation with zero explicit Hamiltonian term but nontraceless jump operators, bounded-norm dissipative generators can approximate Hamiltonian dynamics within $\epsilon$ error in diamond norm using $\mathcal{O}(t^2/\epsilon)$ evolution time. We further prove that for time-independent dynamics this $\mathcal{O}(t^2/\epsilon)$ scaling is in the worst case, necessary and optimal from a geometric perspective, which captures the fundamental decoherence cost for catching up with the speed of Hamiltonian dynamics. Our construction leads to various implications, including the BQP-completeness of purely dissipative dynamics even before reaching approximate equilibrium, a Zeno-adjacent state-independent freezing effect, the no super-quadratic fast-forwarding theorem of a class of purely dissipative dynamics, and reducing Lindbladian simulation cost via gauge changing. 

\end{abstract}
\maketitle
\noindent\textit{\textbf{Introduction.—}}The dynamics of any physical systems are governed by a combination of internal interactions and external coupling to an environment. For a quantum system described by a $d$-dimensional Hilbert space $\mathcal{H} \cong \mathbb{C}^d$, the most general trace-preserving and completely positive Markovian evolution is described by the Gorini--Kossakowski--Sudarshan--Lindblad master equation (Lindbladian)~\cite{gorini1976completely,lindblad1976generators}
\begin{equation}\label{mainlme}
\frac{d\rho}{dt}=-i[H,\rho]+ \sum_{i} \left(F_i\rho F_i^\dag- \frac{1}{2}\{\rho,F_i^\dag F_i\}\right).
\end{equation}
Here, the first term captures the internal, coherent evolution driven by the system Hamiltonian $H \in \mathbb{C}^{d \times d}$, while the second term captures the external interactions mediated by the jump operators $\{F_i\} \subset \mathbb{C}^{d \times d}$. When external interactions are completely absent, the system is closed; it evolves unitarily and reversibly, preserving all quantum information. Conversely, the inclusion of the jump operators defines an open quantum system, where continuous interaction with a bath drives irreversible dissipation and decoherence~\cite{breuer2002theory}.

Understanding how to simulate quantum systems and the effect of dissipation or decoherence is a central theoretical pursuit in quantum mechanics, with profound practical implications for quantum information and quantum computing. Especially, a recent wave of dissipation-based quantum algorithms has emerged, spanning Gibbs state~\cite{chen2025efficient,rouze2026optimal,chen2024local} and ground state~\cite{cubitt2023dissipative,ding2024single,zhan2026rapid} preparation, solving differential equations~\cite{shang2025designing}, and optimization~\cite{chen2025quantum}. The performance, advantages, and inherent limitations of these dissipative approaches, especially when compared to traditional coherent algorithms~\cite{dalzell2023quantum}, are deeply rooted in how open and closed dynamics interface and diverge~\cite{chen2024local}.

To interrogate this boundary, we ask: \textit{can coherent Hamiltonian dynamics be approximately reproduced by purely dissipative interactions, and if so, at what cost?} While Stinespring dilation~\cite{nielsen2010quantum} allows open systems to be modeled via closed dynamics, it does so by artificially expanding the Hilbert space to encompass the entire bath, thereby redefining the system-environment partition and converting external interactions into internal ones. However, if we restrict our view to the system alone, the fundamentally distinct physical origins and time-reversal properties of these two mechanisms suggest they cannot substitute for one another. Even though the gauge invariance~\cite{breuer2002theory} of the Lindbladian allows a certain level of freedom of interconversions, finding $F_i$ such that
\begin{equation}\label{eqvi}
\sum_{i} \left(F_i\rho F_i^\dag- \frac{1}{2}\{\rho,F_i^\dag F_i\}\right)= -i[H,\rho]
\end{equation}
holds exactly is impossible for any nonzero $H$, even if we allow for $F_i$ that have a nonzero trace~\cite{hayden2022canonical}.

In this work, we demonstrate that the internal, coherent Hamiltonian interactions can be well mimicked by strictly external, dissipative ones. Specifically, we prove that purely dissipative Lindbladians (i.e. Lindbladians with only $\{F_i\}$ and no $H$ in \cref{mainlme})can achieve approximate ideal $t$-time Hamiltonian evolution within an error $\epsilon$ in diamond norm using an evolution time of $\mathcal{O}(t^{2}/\epsilon)$. This scaling shows that this imitation is not only possible but also efficient with no exponential overheads in time, system size or precision. Since the purely dissipative dynamics is fundamentally irreversible, we further show that this $\mathcal{O}(t^{2}/\epsilon)$ Lindbladian evolution time is actually necessary and optimal to simulate the reversible Hamiltonian dynamics in the worst case. We prove this via geometry: how to maximize the rotational speed (Hamiltonian dynamics) while minimizing the radial one (decoherence). Later, we show several implications of our results for various topics, including the complexity class of dissipative dynamics, Zeno effect, fast-forwarding dissipative dynamics, and Lindbladian simulations.

\begin{figure}[htbp]
\centering
\includegraphics[width=0.49\textwidth]{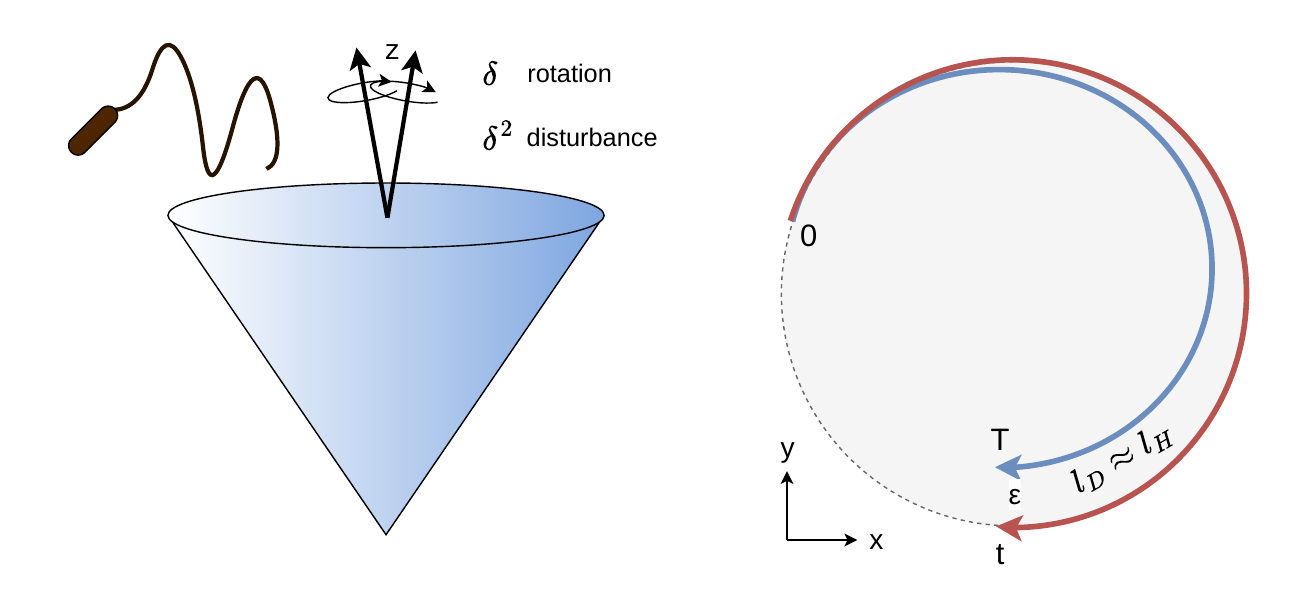}
\caption{\textbf{Left: }Hamiltonian dynamics via dissipation can be compared to driving a top via a whip. Since we want to rotate the top (Hamiltonian dynamics) via an external whip (environment), the rotation of a $\delta$-strength is inevitably accompanied by radial disturbance (dissipation) of $\delta^2$-strength. \textbf{Right: }Under the uniform-in-time error restriction \cref{uerr}, the dissipative trajectory (blue) should be close to the Hamiltonian trajectory (red). As results, the radial distance between the ideal state and the Lindbladian evolved state should be within $\epsilon$, and the trajectory length of the dissipative dynamics ($l_D$) should be close to the trajectory length of the Hamiltonian dynamics ($l_H$).\label{fig1}}
\end{figure}

\noindent\textit{\textbf{Hamiltonian dynamics via dissipation.}} our construction is simple and direct. Consider a purely dissipative, single-jump Lindbladian
\begin{align}\label{slme}
\frac{d\rho}{dt}&=\mathcal{L}[\rho]=F \rho F^\dag-\frac{1}{2}\{F^\dag F,\rho\},
\end{align}
i.e. in GKSL form with only one jump (albeit not necessarily traceless). If we set $F=I-i\delta H$, then we will have
\begin{align}\label{dmis}
\frac{d\rho}{dt}=-i\delta [H,\rho]+\delta^2 \left(H\rho H-\frac{1}{2}\{H^2,\rho\}\right),
\end{align}
where the first term is a $\delta$-strength Hamiltonian dynamics and the second term is $\delta^2$-strength dissipation when viewing $H$ as a jump operator, which has recently been found to have deep connections with quantum phase estimation~\cite{shang2025fast}. Based on \cref{dmis}, we obtain the following theorem.
\begin{theorem}\label{the1}
By setting $\delta=\epsilon/t$, the purely dissipative, single-jump Lindbladian \cref{slme} can approximate the Hamiltonian dynamics $e^{\mathcal{L}_Ht}[\rho]=e^{-iH t}\rho e^{iH t}$ within $\epsilon$ error in diamond distance~\cite{kitaev2002classical} in the sense that $\sup_{0\le s\le t}\left\|e^{\mathcal{L}\frac{T}{t}s}-e^{\mathcal{L}_Hs}\right\|_\diamond\le \epsilon$, by an evolution time
\begin{align}
T=\mathcal{O}\left(\|H\|_\infty^2t^2\epsilon^{-1}\right).
\end{align}
\end{theorem}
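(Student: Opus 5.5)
The plan is to rescale time so that the coherent part of \cref{dmis} reproduces $\mathcal{L}_H$ exactly, and then treat the leftover dissipator as a perturbation. Write $\mathcal{D}_H[\rho]=H\rho H-\frac{1}{2}\{H^2,\rho\}$, so that \cref{dmis} reads $\mathcal{L}=\delta\mathcal{L}_H+\delta^2\mathcal{D}_H$. Taking $T=t/\delta=t^2/\epsilon$ gives $\frac{T}{t}=\delta^{-1}$, and hence
\begin{align}
\mathcal{L}\tfrac{T}{t}s=\left(\mathcal{L}_H+\delta\mathcal{D}_H\right)s .
\end{align}
The claim therefore reduces to bounding $\|e^{(\mathcal{L}_H+\delta\mathcal{D}_H)s}-e^{\mathcal{L}_Hs}\|_\diamond$ uniformly for $s\in[0,t]$.

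For this I will use the Duhamel (variation of parameters) identity:
\begin{align}
e^{(\mathcal{L}_H+\delta\mathcal{D}_H)s}-e^{\mathcal{L}_Hs}=\int_0^s e^{(\mathcal{L}_H+\delta\mathcal{D}_H)(s-u)}\,\delta\mathcal{D}_H\,e^{\mathcal{L}_Hu}\,du .
\end{align}
Both exponentials are CPTP semigroups. This holds because $\mathcal{L}_H+\delta\mathcal{D}_H$ is a valid Lindbladian: it is $\delta^{-1}\mathcal{L}$, and $\mathcal{L}$ is in GKSL form. CPTP maps have diamond norm $1$ and the diamond norm is submultiplicative, so the difference is at most $s\,\delta\,\|\mathcal{D}_H\|_\diamond$. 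Using $\|H\rho H\|_1\le\|H\|_\infty^2\|\rho\|_1$, together with the same bound for the anticommutator term, and applying the same estimate with an identity ancilla gives $\|\mathcal{D}_H\|_\diamond\le 2\|H\|_\infty^2$. The error is therefore at most $2\delta s\|H\|_\infty^2\le 2\delta t\|H\|_\infty^2$ for all $s\le t$.

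It remains to fix the constant. Setting $\delta=\epsilon/t$ as stated gives an error bound of $2\|H\|_\infty^2\epsilon$, so to obtain $\epsilon$ we rescale: we take $\delta=\epsilon/(2\|H\|_\infty^2 t)$, or equivalently normalize $\|H\|_\infty\le1$ and absorb constants. Then $T=t/\delta=2\|H\|_\infty^2t^2/\epsilon=\mathcal{O}(\|H\|_\infty^2t^2\epsilon^{-1})$, as claimed.

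I do not expect a real obstacle here. The only points needing care are these:
\begin{itemize}
\item Confirm that the perturbed generator is itself a Lindbladian, so that its semigroup is contractive. Otherwise one would pick up an $e^{\delta\|\mathcal{D}_H\|s}$ factor; that factor would still be harmless, but it is avoidable.
\item Track the $\|H\|_\infty^2$ dependence correctly, since the stated choice $\delta=\epsilon/t$ implicitly assumes $\|H\|_\infty=\mathcal{O}(1)$.
\end{itemize}
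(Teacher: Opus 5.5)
Your proposal is correct and matches the paper's proof in Appendix~\ref{proof}. The paper also rescales the generator to $\mathcal{L}_H+\delta\mathcal{D}_H$, uses Duhamel's identity together with contractivity of CPTP semigroups (its Lemma~\ref{daml}) to bound the error by $s\,\delta\,\|\mathcal{D}_H\|_\diamond$, and then takes $\delta\le\epsilon/(t\|\mathcal{D}_H\|_\diamond)$ with $\|\mathcal{D}_H\|_\diamond=\mathcal{O}(\|H\|_\infty^2)$, which is the same rescaling of the stated $\delta=\epsilon/t$ that you make.
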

\noindent We give detailed proofs in App.~\ref{proof}, which uses Duhamel's Identity~\cite{evans2022partial} to bound the distance between dynamics via the distance between the Liouvillian operators.

The design of the jump operator is not unique, and we show two generalizations. First, as long as $F=I-\delta A+\mathcal{O}(\delta^2)$ and the anti-Hermitian part of $A$ is $H$, then the Lindbladian evolution time to achieve the same goal in \cref{the1} requires
\begin{align}\label{gen1}
T=\mathcal{O}\left(\|A\|_\infty^2t^2\epsilon^{-1}\right).
\end{align}
A direct example of such a jump is $F=e^{-iH\delta}$, which is unitary short-time Hamiltonian dynamics itself. 

Second, one may argue $F=I-i\delta H$ is not physical since $H$ is a global operator, whereas physical jumps should be local. But note that we can actually localize jumps by increasing their number. Suppose $H$ has the decomposition $H=\sum_i H_i$, with $H_i$ local, then for each term, we can assign a jump operator $F_i=I-i\delta H_i$, and the resulting multi-jump purely dissipative Lindbladian can approximate the Hamiltonian dynamics by an evolution time
\begin{align}
T=\mathcal{O}\left(\sum_i\|H_i\|_\infty^2t^2\epsilon^{-1}\right).
\end{align}
This version is especially relevant for platforms with local dissipative control~\cite{harrington2022engineered}, and it shows that coherent many-body dynamics can be generated from local open-system primitives with only polynomial overhead.

One may notice that in all the above constructions, the jumps have the form $I+\delta O+\mathcal{O}(\delta^2)$ with $O$ some operator. In fact, we can prove this is the only possible form for smooth, purely dissipative Lindbladians that mimic the Hamiltonian dynamics. 
\begin{prop}[Uniqueness up to $O(\delta^2)$]\label{mainprop:rigidity}
Let $\{\mathcal{L}_{D,\delta}\}_{0\le \delta<\delta_0}$ be a smooth family of purely dissipative Lindbladians on a finite-dimensional system, and let
\begin{equation}
\mathcal{L}_H(\rho):=-i[H,\rho].
\end{equation}
Assume that there exists $t_0>0$ such that
\begin{equation}
\varepsilon(\delta):=
\sup_{0\le s\le t_0}
\left\|
e^{\frac{s}{\delta}\mathcal{L}_{D,\delta}}
-
e^{s\mathcal{L}_H}
\right\|_\diamond
\xrightarrow[\delta\to 0]{}0.
\label{maineq:rigidity_uniform_assumption}
\end{equation}
Then
\begin{equation}
\mathcal{L}_{D,\delta}
=
\delta \mathcal{L}_H + O(\delta^2)
\label{maineq:rigidity_generator}
\end{equation}
in diamond norm.
\end{prop}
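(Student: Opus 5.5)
The plan is to Taylor-expand the smooth family and identify its coefficients order by order, using the uniform-in-$s$ hypothesis \cref{maineq:rigidity_uniform_assumption} at two different time scales. Since the system is finite-dimensional and $\delta\mapsto\mathcal{L}_{D,\delta}$ is smooth (in particular $C^2$ on $[0,\delta_0)$), Taylor's theorem with remainder gives
\begin{equation*}
\mathcal{L}_{D,\delta}=\mathcal{L}_0+\delta\,\mathcal{L}_1+R_\delta,\qquad \|R_\delta\|_\diamond\le C\delta^2,
\end{equation*}
where $\mathcal{L}_0=\mathcal{L}_{D,0}$ and $\mathcal{L}_1=\partial_\delta\mathcal{L}_{D,\delta}|_{\delta=0}$, and all norms on superoperators are equivalent. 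The claim \cref{maineq:rigidity_generator} is then exactly the pair of identities $\mathcal{L}_0=0$ and $\mathcal{L}_1=\mathcal{L}_H$. I expect the purely dissipative structure not to be needed; only the convergence hypothesis and smoothness enter.

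\emph{Step 1: $\mathcal{L}_0=0$.} This is the step I expect to be the main obstacle. The rescaled generator $\mathcal{L}_{D,\delta}/\delta$ may a priori blow up like $\mathcal{L}_0/\delta$. Zeno-type limits show that semigroups with divergent generators can still converge for each fixed $s>0$, so pointwise convergence alone would not rule this out. I would exploit the uniformity in $s$ by looking at the short time scale $s=\delta\tau$ with $\tau>0$ fixed. For $\delta$ small enough that $\delta\tau\le t_0$, the hypothesis gives
\begin{equation*}
\big\|e^{\tau\mathcal{L}_{D,\delta}}-e^{\delta\tau\mathcal{L}_H}\big\|_\diamond\le\varepsilon(\delta)\to0 .
\end{equation*}
As $\delta\to0$, continuity of the matrix exponential gives $e^{\tau\mathcal{L}_{D,\delta}}\to e^{\tau\mathcal{L}_0}$, while $e^{\delta\tau\mathcal{L}_H}\to\mathrm{id}$. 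Hence $e^{\tau\mathcal{L}_0}=\mathrm{id}$ for every $\tau>0$, and differentiating at $\tau=0$ yields $\mathcal{L}_0=0$.

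\emph{Step 2: $\mathcal{L}_1=\mathcal{L}_H$.} With $\mathcal{L}_0=0$ we have $\mathcal{L}_{D,\delta}/\delta=\mathcal{L}_1+R_\delta/\delta$, and $\|R_\delta/\delta\|_\diamond\le C\delta\to0$. Therefore, for each fixed $s\in[0,t_0]$,
\begin{equation*}
e^{\frac{s}{\delta}\mathcal{L}_{D,\delta}}\to e^{s\mathcal{L}_1}.
\end{equation*}
This can be made quantitative via Duhamel's identity as in App.~\ref{proof}. Combined with the hypothesis, it gives $e^{s\mathcal{L}_1}=e^{s\mathcal{L}_H}$ for all $s\in[0,t_0]$. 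Differentiating at $s=0^+$ gives $\mathcal{L}_1=\mathcal{L}_H$.

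Substituting both identities back into the Taylor expansion gives $\mathcal{L}_{D,\delta}=\delta\mathcal{L}_H+R_\delta$ with $\|R_\delta\|_\diamond=O(\delta^2)$, which is \cref{maineq:rigidity_generator}. As a side remark, applying this to a single jump $F_\delta$, and using smoothness of the jump itself where that is available, forces $F_0$ to be proportional to the identity, consistent with the form $I+\delta O+O(\delta^2)$ described before the proposition.
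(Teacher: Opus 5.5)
Your proposal is correct and follows essentially the same route as the paper's proof: the rescaling $s=\delta\tau$ (the paper writes $s=a\delta$) uses the uniformity in $s$ to force $e^{\tau\mathcal{L}_{D,0}}=\mathrm{id}$ and hence $\mathcal{L}_{D,0}=0$. You then show $e^{\frac{s}{\delta}\mathcal{L}_{D,\delta}}\to e^{s\mathcal{L}_1}$ and differentiate at $s=0$ to obtain $\mathcal{L}_1=\mathcal{L}_H$; the paper does this step uniformly in $s$ via Duhamel, whereas your pointwise-in-$s$ convergence is already enough.
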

We refer to App.~\ref{app:uniqueness} for the detailed proof. But note that our proof boils down to showing that Eq.~\eqref{maineq:rigidity_uniform_assumption} implies constraints on how the family of smooth Lindbladians $\mathcal{L}_{D,\delta}$ must behave at different orders of $\delta$.
\noindent An interesting remark here is that while shifting the Hamiltonian (adding identity) is trivial and only introduces a global phase, shifting the jump operator can fundamentally change the dynamics.

We finish this section by introducing a corollary of \cref{the1}. Since Hamiltonian dynamics can be recast as dissipation, therefore, we can actually convert any Lindbladian
into a pure dissipation form. For the general form \cref{mainlme}, we can construct a purely dissipative Lindbladian with jump operators $F'$ where we let $F'_0=I-i\delta H$ and let $F_i'=\sqrt{\delta}F_i$, then by setting $\delta=\epsilon/t$, to simulate the original Lindbladian for an evolution time $t$ within error $\epsilon$, the required new Lindbladian evolution time is
\begin{align}
T=\mathcal{O}\left(\|H\|_\infty^2t^2\epsilon^{-1}\right).
\end{align}

\noindent\textit{\textbf{Geometric Lower Bound.}} Now, we want to show that since the dissipation is irreversible and fundamentally different from the Hamiltonian dynamics after all, the $\mathcal{O}(t^2/\epsilon)$ cost is what you have to pay in the worst case, as illustrated in the theorem below:
\begin{theorem}\label{the2}
There exist families of Hamiltonians $H\in\mathbb{C}^{2\times 2}$ such that 
if a purely dissipative Lindbladian of the form
\begin{equation}\label{dlm}
\mathcal{L}_D(\rho)=\sum_{j=1}^m\left(F_j\rho F_j^\dagger-\tfrac12\{F_j^\dagger F_j,\rho\}\right),
~~ \|F_j\|_\infty\le C,
\end{equation}
can simulate the target Hamiltonian dynamics for an evolution time $t$ within an diamond-distance error $\epsilon$ uniformly in time
\begin{equation}\label{uerr}
\sup_{0\le s\le t}\left\|e^{\mathcal{L}_D\frac{T}{t}s}-e^{\mathcal{L}_Hs}\right\|_\diamond\le \epsilon,
\end{equation}
then the total Lindbladian evolution time $T$ must satisfy the lower bound
\begin{equation}\label{lT}
T= \Omega\left(\frac{\|H\|_\infty^2}{m C^2}t^2\epsilon^{-1}\right).
\end{equation}
\end{theorem}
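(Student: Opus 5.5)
The plan is to work on the Bloch ball of a qubit and compare two quantities along the trajectory: the rotational (tangential) speed, which must match that of the Hamiltonian dynamics, and the radial (purity-decreasing) speed, which is forced by dissipation. The argument has four steps.

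\textbf{Step 1: the target family.} Take $H=\frac{\omega}{2}\sigma_z$, so $\|H\|_\infty=\omega/2$. Take as input a pure state on the equator, for example $|+\rangle$. Then $e^{\mathcal{L}_Hs}$ rotates the Bloch vector around the equator with angular speed $\omega$. Over time $t$ the vector travels arc length $l_H=\omega t$ while staying on the sphere.

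Rescale time as $\tau=sT/t$ and write $\mathcal{L}'=\frac{T}{t}\mathcal{L}_D$. This is again purely dissipative, with jumps $F_j'=\sqrt{T/t}\,F_j$, so $\|F'_j\|_\infty\le C\sqrt{T/t}=:C'$.

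\textbf{Step 2: a pointwise speed inequality.} This is the key step and the one I expect to be the main obstacle. For a pure state $\rho=|\psi\rangle\langle\psi|$ and one jump $F$, let $\mu=\langle\psi|F|\psi\rangle$ and $F=\mu I+G$ with $G|\psi\rangle\perp|\psi\rangle$ in its leading action. Direct computation gives two formulas:
\begin{align}
\frac{d}{ds}\Tr\rho^2&=-2\left(\langle F^\dagger F\rangle-|\mu|^2\right)=-2\|(F-\mu)\psi\|^2,\\
\|\dot\rho\|_{\perp}&\lesssim |\mu|\,\|(F-\mu)\psi\|.
\end{align}
The second line bounds the tangential part of $\dot\rho$, since the only coherent-like term is $\mu^*G\rho+\rho G^\dagger\mu$ (the familiar $F=I-i\delta H$ mechanism). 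Summing over jumps and applying Cauchy--Schwarz, the tangential speed $v$ and the purity loss rate $r$ satisfy
\begin{align}
v^2\le c\,\Big(\sum_j|\mu_j|^2\Big)\,r\le c\,mC'^2\, r .
\end{align}
This is the precise form of the whip intuition: a rotation of strength $\delta$ costs a radial loss of strength $\delta^2$.

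The difficulty is that this bound only holds cleanly at pure states, whereas the evolved state is only $\epsilon$-close to pure. So I will instead work with the Bloch vector $\vec r(s)$ of the evolved state and prove the mixed-state analogue
\begin{align}
|\dot{\vec r}_\perp|^2\le c\,mC'^2\,\big(-\tfrac{d}{ds}|\vec r|^2\big)+\text{(terms vanishing as }|\vec r|\to1).
\end{align}
For this I would expand the qubit Lindbladian in the Pauli basis, as an affine map $\dot{\vec r}=M\vec r+\vec b$, and use positivity of the Kossakowski matrix $K=\sum_j f_jf_j^\dagger$, where $f_j$ are the Pauli coefficient vectors of $F_j$. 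The antisymmetric part of $M$ is $\sim\im(\mu_j^*\vec g_j)$, while the symmetric contraction and the drift $\vec b$ are controlled by $\sum_j|\vec g_j|^2$. Cauchy--Schwarz then again gives (rotation)$^2\lesssim mC'^2\cdot$(dissipation).

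\textbf{Step 3: using the uniform error.} By \eqref{uerr}, the output is $\epsilon$-close to the pure rotating state for all $s\le t$. This has two consequences:
\begin{itemize}
\item $1-|\vec r(s)|=O(\epsilon)$, so the integrated radial loss is $\int_0^t r\,ds=O(\epsilon)$.
\item Since the path stays within $\epsilon$ of the circle and must wind angle $\omega t$, the tangential displacement accumulated over $[0,t]$ is at least $\omega t-O(\epsilon)$, so $\int_0^t v\,ds\ge \omega t/2$ for small $\epsilon$.
\end{itemize}

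\textbf{Step 4: combining.} By Cauchy--Schwarz and Step 2,
\begin{align}
(\omega t)^2/4\le t\int_0^t v^2\,ds\le c\,t\,mC'^2\int_0^t r\,ds\le c' t\, m C^2\frac{T}{t}\,\epsilon .
\end{align}
Rearranging gives $T\ge \Omega(\omega^2t^2/(mC^2\epsilon))$, which is \eqref{lT} since $\|H\|_\infty=\omega/2$.
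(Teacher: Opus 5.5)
There is a genuine gap in Step 2. The pointwise bound $\|\dot\rho\|_\perp\lesssim|\mu|\,\|(F-\mu)\psi\|$ is false, even at pure states.

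Write $F=\mu I+G$ with $\langle\psi|G|\psi\rangle=0$. Then $\mathcal{D}_F(\rho)=\tfrac12[\bar\mu G-\mu G^\dagger,\rho]+\mathcal{D}_G(\rho)$. The tangential part of the first term involves the component of $G^\dagger\psi$ orthogonal to $\psi$, not only $G\psi$, and the purity loss $2\|G\psi\|^2$ controls only the latter. For your guiding example $F=I-i\delta H$, $G$ is anti-Hermitian, so $\|G^\dagger\psi\|=\|G\psi\|$ and the intuition looks right; in general it is not.

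A concrete counterexample: take $F=\mu I+b|0\rangle\langle 1|$ and $\rho=|0\rangle\langle 0|$.
\begin{itemize}
\item Since $F|0\rangle=\mu|0\rangle$, the purity is stationary.
\item Yet $\mathcal{D}_F(\rho)=-\tfrac12(\mu\bar b|1\rangle\langle 0|+\bar\mu b|0\rangle\langle 1|)$, a purely tangential Bloch velocity of size $|\mu||b|$.
\item In Bloch form, $\frac{d}{ds}|\mathbf r|^2=2\mathbf r^\top S\mathbf r+2\mathbf r\cdot\mathbf c$. Here the drift $\mathbf c=|b|^2\mathbf e_z$ (amplitude damping) exactly cancels the contraction at the pole.
\end{itemize}
So your Bloch-vector analogue fails too. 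The drift is of the same order as $S$, has no sign, and does not vanish as $|\mathbf r|\to 1$.

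This is not a pure-versus-mixed technicality. Rotate the example so that $|+\rangle$ is the eigenvector, $F=\mu I+b|+\rangle\langle -|$, with phases chosen so the induced rotation is about $z$. The purity-loss rate along the equator is then only $|b|^2(1-\cos\phi)^2=O(|b|^2\phi^4)$. For short arcs $\Theta\ll 1$, the orbit of $|+\rangle$ alone can be tracked at cost of order $\Theta^3/(C^2\epsilon)$ at leading order, rather than $\Theta^2/(C^2\epsilon)$. Hence no argument using only the orbit of a single input state can give the theorem in its full range. You must use that the uniform-in-time diamond-norm condition holds for all inputs.

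The fix is to apply closeness to both $\mathbf r$ and $-\mathbf r$, isolating the linear part $A_\tau=e^{\tau B}$ with $B=S+[\boldsymbol{\omega}\times]$; this removes the drift. The paper then argues at the generator level:
\begin{itemize}
\item $|\boldsymbol{\omega}|^2\le mC^2(-\Tr S)$, where $\boldsymbol{\omega}=-2\sum_j\im(a_j^*\mathbf u_j)$ involves the identity components $a_j$ of $F_j=a_jI+\mathbf u_j\cdot\boldsymbol{\sigma}$, not the state-dependent $\mu_j$.
\item $-\Tr S\le 6\epsilon/T$, from $\det A_T=e^{T\Tr S}\ge(1-\epsilon)^3$.
\item $|\boldsymbol{\omega}|\ge\Theta/(8T)$, from the path length of $A_\tau\mathbf e_x$.
\end{itemize}

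Alternatively, your trajectory-plus-Cauchy--Schwarz plan goes through if you run it on $\mathbf v(\tau)=A_\tau\mathbf e_x$ instead of on the state:
\begin{itemize}
\item There $\frac{d}{d\tau}|\mathbf v|^2=2\mathbf v^\top S\mathbf v\le 0$.
\item The identities $-\mathbf v^\top S\mathbf v=2\sum_j|\mathbf u_j\times\mathbf v|^2$ and $|\boldsymbol{\omega}\times\mathbf v|\le 2\sum_j|a_j|\,|\mathbf u_j\times\mathbf v|$ give the correct pointwise inequality $|\boldsymbol{\omega}\times\mathbf v|^2\le 2mC^2(-\mathbf v^\top S\mathbf v)$.
\item Also $|S\mathbf v|^2\le\|S\|_\infty(-\mathbf v^\top S\mathbf v)\le 4mC^2(-\mathbf v^\top S\mathbf v)$.
\item Integrating, $\int_0^T|\dot{\mathbf v}|^2d\tau\lesssim mC^2(1-|\mathbf v(T)|^2)\lesssim mC^2\epsilon$.
\end{itemize}
Combining this with $L(\mathbf v)\ge\Theta/5$ and $L^2\le T\int_0^T|\dot{\mathbf v}|^2d\tau$ yields $T\gtrsim\Theta^2/(mC^2\epsilon)$, without the determinant step.
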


We have two remarks for \cref{the2}. First, the $mC^2$ factor in \cref{lT} matches the block encoding cost in the digital Lindbaldian simulation algorithms on quantum computers~\cite{kliesch2011dissipative,childs2016efficient,cleve2016efficient,li2022simulating,yu2025lindbladian,peng2408quantum,ding2024simulating,pocrnic2025quantum,kato2024exponentially}. Second, the reason we ask the uniform-in-time error assumption is to avoid the adversarial cheating where the Lindbladian doesn't actually simulate the Hamiltonian dynamics but only approximates the unitary map $e^{\mathcal{L}_H t}$ at the specific time $t$.

The proof idea of \cref{the2} is from a geometric point of view. Since the lower bound cares about the worst case, we can analyze the lower bound via a single-qubit system where we consider approximating $H=\nu\sigma_z/2$ via \cref{dlm}. By mapping the qubit's density matrix to a Bloch vector $r \in \mathbb{R}^3$, the purely dissipative Lindbladian \cref{dlm} can be recast as an affine Bloch equation of the form $\dot{r} = Br + c$~\cite{nielsen2010quantum} with 
\begin{equation}
B = S + [\boldsymbol{\omega} \times].
\end{equation}
The matrix $[\boldsymbol{\omega} \times]$ (representing cross product by $\boldsymbol{\omega}$) is real and antisymmetric, representing the effective rotational speed that attempts to mimic the target Hamiltonian dynamics. The matrix $S$ is real, symmetric, and strictly negative semi-definite ($S \preceq 0$). This component drives the inevitable radial contraction, pulling the state away from the boundary of the Bloch sphere and representing irreversible decoherence. 

Since both $[\boldsymbol{\omega}\times]$ and $S$ emerge from \cref{dlm}, we prove that they have the relation 
\begin{equation}\label{maine1}
|\boldsymbol{\omega}|^2\le mC^2(-\Tr S).
\end{equation}
This indicates that one cannot increasing the rotational speed without increasing the radial decoherence. The $|\boldsymbol{\omega}|^2$ versus $-\Tr S$ have been seen in \cref{dmis} with the $\delta$-strength Hamiltonian part and the $\delta^2$-strength dissipative part. We can also understand this from \cref{fig1} left, where we can imagine the Hamiltonian dynamics via dissipation as rotating a top via a whip. Since the internal Hamiltonian is zero, top cannot rotate by itself, and can only rely on the external force of whip, therefore, it is impossible to rotate the top without radial disturbance.

Under the uniform-in-time requirement \cref{uerr}, we want the dissipative trajectory on the Bloch sphere to be close to the Hamiltonian trajectory as shown in \cref{fig1} right. This leads to two further consequences. First, $S$ should be small such that the decoherence won't make the radial distance between the Bloch surface and the Lindbladian evolved state from a pure initial state to go beyond $\epsilon$. We prove $S$ must satisfy
\begin{equation}\label{maine2}
-\Tr(S)\le \frac{6\epsilon}{T}.
\end{equation}
Second, $|T\boldsymbol{\omega}|$ should be large such that the length of the dissipative trajectory $l_D$ can catch the length of the Hamiltonian trajectory $l_H=\nu t$, which relates to the so-called quantum speed limit~\cite{taddei2013quantum,pires2016generalized,deffner2017quantum}. We prove that $|\boldsymbol{\omega}|$ must satisfy
\begin{equation}\label{maine3}
|\boldsymbol{\omega}|\ge \frac{\nu t}{8T}.
\end{equation}
By combining \cref{maine1}, \cref{maine2}, and \cref{maine3}, we proof \cref{the2}. Detailed proofs can be found in App. \ref{sec:uniform_time_lb}.

\noindent\textit{\textbf{Implications.—}}We now show several direct implications of our results.

\textit{1. Purely dissipative dynamics is BQP-complete.—}A direct implication of our result is that purely dissipative dynamics is BQP-complete~\cite{bernstein1993quantum} beyond the fixed point. Since Hamiltonian dynamics is BQP-complete, \cref{the1} indicates the purely dissipative Lindbladian is BQP-hard, which combining with the fact that Lindbladian can be efficiently simulated by repeated short-time unitary transformations plus tracing out operations via Stinespring dilation~\cite{nielsen2010quantum,kliesch2011dissipative,cleve2016efficient}, leads to the BQP-complete result. An important remark here is that, previously, the BQP-complete results~\cite{verstraete2009quantum,chen2024local,rouze2025efficient} of dissipation are all about the fixed point where the universal quantum computation is embedded into the steady states of the dissipative Lindbladian. Our result is different from the existing ones on BQP completeness of Lindbladians in the literature~\cite{verstraete2009quantum,chen2024local,rouze2025efficient} in the sense that we do not encode the computation into properties of a (quasi)stationary state of the dynamics. Other results~\cite{kastoryano2013precisely} achieve universality of Lindbladian dynamics by introducing auxiliary 'timer gadgets' to orchestrate sequential measurement-based operations. However, this typically requires significant spatial overhead in the form of complex lattice geometries and auxiliary clock qubits, overheads which are not present in our simple construction.

\textit{2. A novel Zeno-adjacent freezing mechanism.} Quantum Zeno effect~\cite{itano1990quantum} states that strong and frequent measurements (or jump operators with large norms in the language of Lindbladian) can sometimes freeze the system's evolution. The mechanism of the traditional Zeno effect is to construct strong dissipators $\mathcal{L}_D[\rho]=\sum_{i} \left(F_i\rho F_i^\dag- \frac{1}{2}\{\rho,F_i^\dag F_i\}\right)$, which is so strong, such that the system can ignore the internal Hamiltonian and is forced to stay in the steady subspace where $\mathcal{L}_D[\rho]=0$, also known as the Zeno subspace~\cite{facchi2002quantum}. Our \cref{the1} also leads to a fundamentally different mechanism of quantum Zeno effect. Suppose the system Hamiltonian is $H$, and we know $H$, then to freeze the dynamics, we can simply introduce a jump operator $F=\delta^{-1/2}(I+i\delta H)$. Then we have
\begin{equation}
\frac{d\rho}{dt}=-i[H,\rho]+ F\rho F^\dag- \frac{1}{2}\{\rho,F^\dag F\}=\mathcal{O}(\delta).
\end{equation}
Thus, setting $\delta=\epsilon/t$ can make sure $\|\rho(0)-\rho(t)\|_1=\mathcal{O}(\epsilon)$. The idea of this new mechanism is to actively generate the reversal time evolution via dissipation to cancel out the system evolution. While both require large dissipation, the biggest difference between these two mechanisms is that the traditional mechanism can only freeze states that are inside the Zeno subspace, while ours can freeze any state.

\textit{3. No super-quadratic fast forwarding for a class of Lindbladians.—}A central result in Hamiltonian simulation is the no fast-forwarding theorem~\cite{berry2007efficient,atia2017fast}, which states that there exist no universal (meaning for any $H$) Hamiltonian simulation algorithms that can simulate $t$-time Hamiltonian dynamics within a constant precision $\epsilon=1/4$ using sublinear costs in $t$. For the class of purely dissipative Lindbladian we considered ($F=I-i\delta H$), we can simply let $H$ be the sparse parity check Hamiltonian~\cite{berry2007efficient,farhi1998limit} used in the Hamiltonian no-fast-forwarding proof and set $\delta=\epsilon/t=1/4T$, then we have $T=\mathcal{O}(t^2)$. If this Lindbladian can be super-quadratically fast-forwarded, i.e. $o(\sqrt{T})$, we will conflict with the query lower bound of the parity check problem, implying this class of Lindbladians cannot be fast-forwarded beyond quadratic. On the other hand, this also indicates that this class of Lindbladians has the possibility to be quadratically fast-forwarded in some cases, which, if true, will be another class of fast-forwardable Lindbladians beyond the recent progress~\cite{shang2025fast,gao2025ancilla,shang2025exponential}. We want to emphasize that we cannot simply use the $t$-time Hamiltonian dynamics to simulate the $T$-time Lindbladian evolution and claim achieving the quadratic fast-forwarding since the $\delta^2$-strength of the dissipation is fixed and prevents the simulation error to be arbitrarly small. This raises the question of how to construct fast-forwarding algorithms for these Lindbladians.

\textit{4. Reducing Lindbladian simulation cost via gauge changing.—}One way to understand \cref{the1} is via the gauge freedom of Lindbladians. For $H$ and $F_i$ in \cref{mainlme}, we can do a gauge transform~\cite{breuer2002theory} while keeping the Liouvillian operator and therefore, the dynamics unchanged
\begin{align}
\text{Jump: }F_i&\rightarrow F_i'=\sum_{j} U_{ij}F_j+c_i I,\\
\text{Hamiltonian: }H&\rightarrow H'= -\frac{i}{2}\sum_{i}\left(c_i^*F_i'-c_iF_i'^\dag\right),
\end{align}
where $U$ is unitary, $c_i$ are arbitrary complex numbers. If we let $H=0$ and $F_i=I-i\delta H$, then by setting $U=I$ and $c_i=-1$ we will have $H'=\delta H$ and $F'=-i\delta H$, which recovers \cref{dmis}. In the context of simulating Lindbladian on quantum computers, given the Hamiltonian $H$ and jump operators $F_i$, the algorithmic cost is then proportional to $\|H\|_\infty+\sum_i\|F_i\|^2_\infty$~\cite{kliesch2011dissipative,childs2016efficient,cleve2016efficient,li2022simulating,yu2025lindbladian,peng2408quantum,ding2024simulating,pocrnic2025quantum,kato2024exponentially}. For the Lindbladian we considered in \cref{the1}, we have $\|F\|_\infty=\mathcal{O}(1)$ while $\|H'\|_\infty+\sum_i\|F'\|^2_\infty=\mathcal{O}(\delta)$. Therefore, while the gauge transform of the Lindbladian doesn't change the actual dynamics, it indeed can drastically change the simulation cost. This suggests that, if we want to simulate a Lindbaldian on a quantum computer, instead of directly putting the Hamiltonian and jumps into the simulation algorithms, we can first find a better gauge with smaller norms and simulating the Lindbladian in that gauge will result in a lower cost.

\noindent\textit{\textbf{Discussions.—}}In this work, we show the Hamiltonian dynamics can be efficiently simulated via pure dissipation via explicit construction (\cref{the1}) and prove it matches the $\Omega(t^2/\epsilon)$ lower bound (\cref{the2}). 
It is worth contrasting our result with a line of work on emergent unitary dynamics from dissipation initiated by~\cite{zanardi2014coherent,zanardi2015geometry} and extended by~\cite{albert2016geometry}. These works consider a strongly dissipative Lindbladian $\mathcal{L}_0$ admitting a non-trivial manifold of steady states, perturbed by a weak Hamiltonian term $\lambda H$. In the limit $\lambda \to 0$ with time rescaled as $\tau = \lambda t$, the effective dynamics projected onto the steady-state manifold becomes unitary, generated by a projected Hamiltonian $PHP$. The setup we consider here is fundamentally different in three respects. First, our Lindbladian has no Hamiltonian term at all: the coherent dynamics is generated entirely by the dissipator. Second, our construction does not rely on the existence of a non-trivial steady-state manifold: the target unitary evolution is approximated on the entire Hilbert space, not on a projected subspace. Third, we provide a quantitative error bound with matching upper and lower bounds of order $t^2/\epsilon$, whereas the dissipation-projected dynamics literature is primarily concerned with the qualitative emergence of unitarity and its geometric structure. Our $\Omega(t^2/\epsilon)$ lower bound is also naturally dual to the $\mathcal{O}(t^2/\epsilon)$ overhead identified by~\cite{cleve2016efficient} for the reverse direction of simulating Lindblad evolution via Hamiltonian evolution on a dilated system, with a similar scaling appearing in the repeated-interactions simulation of~\cite{pocrnic2025quantum}.

Several directions can be further explored. For example, whether and when it is possible and how to achieve the quadratic fast-forwarding for our considered Lindbladians as explained in Implication 3. Also, while in Implication 4, we explain the importance of choosing gauge in Lindbladian simulations, accurately estimating $\|H\|_\infty+\sum_i\|F_i\|^2_\infty$ is actually QMA-hard~\cite{kempe2006complexity}. This raises the question of how to find a better gauge in a practical way.

\begin{acknowledgments}
Z.S. would like to thank Weijie Xiong and Wenjun Yu for insightful discussions.
D.S.F. acknowledges financial support from the Novo Nordisk Foundation (Grant No. NNF20OC0059939 Quantum for Life). D.S.F. and Z.S. acknowledge support by the ERC grant GIFNEQ 101163938.
\end{acknowledgments}

\bibliography{ref}

\begin{appendix}
\renewcommand\thefigure{\thesection.\arabic{figure}}
\onecolumngrid
\section{Proof of \cref{the1} and related results\label{proof}}
\textit{On $F=I-\delta A$:} To avoid ambiguity, in the following, we will use $\|\cdot\|_\infty$ to represent the spectral norm $\|\cdot\|$ in the main text. When we set the jump operator to be $F=I-\delta A$ with $A=G+iH$, we have
\begin{align}
F\rho F^\dag&=(I-\delta (G+iH))\rho(I-\delta (G-iH))\nonumber\\
&=\rho-\delta (G+iH)\rho-\delta \rho(G-iH)+\delta^2 (G+iH)\rho (G-iH)\nonumber\\
&=\rho-\delta\{G,\rho\}-i\delta [H,\rho]+\delta^2 (G+iH)\rho (G-iH).
\end{align}
Also, because we have
\begin{align}
F^\dag F&=(I-\delta (G-iH))(I-\delta (G+iH))\nonumber\\
&=I-\delta (G-iH)-\delta (G+iH)+\delta^2(G-iH)(G+iH)\nonumber\\
&=I-2\delta G+\delta^2(G-iH)(G+iH),
\end{align}
therefore, we have
\begin{align}
\frac{1}{2}\{F^\dag F,\rho\}&=\frac{1}{2}\{I-2\delta G+\delta^2(G-iH)(G+iH),\rho\}\nonumber\\
&=\rho-\delta\{G,\rho\}+\frac{1}{2}\delta^2\{(G-iH)(G+iH),\rho\}.
\end{align}
Thus, the derivative of the density matrix has the form
\begin{align}
\frac{d\rho}{dt}&=\mathcal{D}_F[\rho]=F^\dag \rho F-\frac{1}{2}\{F^\dag F,\rho\}\nonumber\\
&=\rho-\delta\{G,\rho\}-i\delta [H,\rho]+\delta^2 (G+iH)\rho (G-iH)-\rho+\delta\{G,\rho\}-\frac{1}{2}\delta^2\{(G-iH)(G+iH),\rho\}\nonumber\\
&=-i\delta [H,\rho]+\delta^2\mathcal{D}_A[\rho],
\end{align}
where we denote $\mathcal{D}_O[\rho]=O\rho O^\dag-\frac{1}{2}\{O^\dag O,\rho\}$.

Now, to bound the error of simulating the Hamiltonian dynamics, we first introduce the following lemma
\begin{lemma}\label{daml}
Given two Lindbladians with the Liouvillian operators $\mathcal{L}_1$ and $\mathcal{L}_2$ respectively, the diamond distance between the two Lindbladians for an evolution time $t$ has the upper bound
\begin{align}
\|e^{\mathcal{L}_1t}-e^{\mathcal{L}_2t}\|_\diamond\leq t\|\mathcal{L}_1-\mathcal{L}_2\|_\diamond.
\end{align}
\end{lemma}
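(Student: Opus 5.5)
The plan is to use Duhamel's identity. It expresses the difference of the two semigroups as an integral over intermediate times, after which contractivity of quantum channels in diamond norm does the rest. Define $\Phi(s)=e^{\mathcal{L}_1 s}e^{\mathcal{L}_2(t-s)}$ for $s\in[0,t]$. Then $\Phi(t)=e^{\mathcal{L}_1t}$ and $\Phi(0)=e^{\mathcal{L}_2t}$. Since both generators are finite-dimensional linear maps, $\Phi$ is differentiable, with
\begin{align}
\frac{d\Phi}{ds}=e^{\mathcal{L}_1 s}(\mathcal{L}_1-\mathcal{L}_2)e^{\mathcal{L}_2(t-s)}.
\end{align}
The order of factors matters here, because $\mathcal{L}_1$ and $\mathcal{L}_2$ need not commute. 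The fundamental theorem of calculus then gives
\begin{align}
e^{\mathcal{L}_1t}-e^{\mathcal{L}_2t}=\int_0^t e^{\mathcal{L}_1 s}(\mathcal{L}_1-\mathcal{L}_2)e^{\mathcal{L}_2(t-s)}\,ds.
\end{align}

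Next, take the diamond norm of both sides. Apply the triangle inequality for the integral, which is legitimate since the integrand is continuous in $s$. Then apply submultiplicativity of the diamond norm under composition, $\|\mathcal{A}\mathcal{B}\|_\diamond\le\|\mathcal{A}\|_\diamond\|\mathcal{B}\|_\diamond$. This holds because the diamond norm is the operator norm of $\mathcal{A}\otimes\mathrm{id}$ with respect to the trace norm, and $(\mathcal{A}\mathcal{B})\otimes\mathrm{id}=(\mathcal{A}\otimes\mathrm{id})(\mathcal{B}\otimes\mathrm{id})$.

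The key input is that $e^{\mathcal{L}_1 s}$ and $e^{\mathcal{L}_2(t-s)}$ are CPTP maps for $s\ge0$, since they are generated by Lindbladians in GKSL form. Every CPTP map has diamond norm exactly $1$: it is at least $1$ because trace is preserved on states, and at most $1$ because $\mathcal{E}\otimes\mathrm{id}$ is again CPTP. A CPTP map is trace-norm contractive on Hermitian inputs by Jordan decomposition, and on general inputs as well, by the standard fact that the diamond norm of a CP map equals the norm of $\mathcal{E}^\dagger(I)$. The integrand is therefore bounded by $\|\mathcal{L}_1-\mathcal{L}_2\|_\diamond$ for every $s$, and integrating over $[0,t]$ yields the claimed bound $t\|\mathcal{L}_1-\mathcal{L}_2\|_\diamond$.

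The only step requiring care is this contractivity fact for CPTP maps on possibly non-Hermitian operators in the extended space. I would simply cite it as standard (Kitaev, or Watrous' text) rather than reprove it. Everything else is routine finite-dimensional calculus.
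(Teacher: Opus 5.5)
Your proof is correct and follows the paper's argument: Duhamel's identity, then submultiplicativity of the diamond norm and $\|e^{\mathcal{L}s}\|_\diamond\le 1$ for the CPTP semigroups. The only difference is cosmetic. You place $e^{\mathcal{L}_1 s}$ on the left and $e^{\mathcal{L}_2(t-s)}$ on the right of the integrand, whereas the paper uses the reverse ordering, and you derive the identity and the contractivity fact explicitly rather than citing them.
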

\begin{proof}
Using the Duhamel Identity~\cite{evans2022partial}, we have
\begin{align}
e^{\mathcal{L}_1t} - e^{\mathcal{L}_2t} = \int_0^t e^{\mathcal{L}_2(t-s)} (\mathcal{L}_1-\mathcal{L}_2) e^{\mathcal{L}_1s} ds,
\end{align}
which leads to
\begin{align}
\|e^{\mathcal{L}_1t} - e^{\mathcal{L}_2t} \|_\diamond&\leq \int_0^t \|e^{\mathcal{L}_2(t-s)}\|_\diamond\| \mathcal{L}_1-\mathcal{L}_2\|_\diamond \|e^{\mathcal{L}_1s}\|_\diamond ds\nonumber\\
&\leq \int_0^t \| \mathcal{L}_1-\mathcal{L}_2\|_\diamond  ds=t\| \mathcal{L}_1-\mathcal{L}_2\|_\diamond,
\end{align}
where we use $\|e^{\mathcal{L}t}\|_\diamond\leq 1$.
\end{proof}
Now, using \cref{daml}, we have
\begin{align}
\|e^{\mathcal{D}_F t/\delta}-e^{\mathcal{L}_Ht}\|_\diamond\leq t\delta \|\mathcal{D}_A\|_\diamond,
\end{align}
where we use $\mathcal{L}_H$ to denote the Liouvillian operator of Hamiltonian dynamics under $H$. To have an approximation with an error $\epsilon$, we can simply ask
\begin{align}
t\delta \|\mathcal{D}_A\|_\diamond\leq \epsilon\rightarrow\delta\leq \frac{\epsilon}{t\|\mathcal{D}_A\|_\diamond}.
\end{align}

Since the Lindbladian evolution time $T$ is $T=t/\delta$, thus, we obtain
\begin{align}
T\geq  \|\mathcal{D}_A\|_\diamond t^2/\epsilon.
\end{align}
Since $\|\mathcal{D}_A\|_\diamond=\mathcal{O}(\|A\|_\infty^2)$, we finish the proof
\begin{align}
T=\mathcal{O}(\|A\|_\infty^2t^2/\epsilon).
\end{align}
When $A=iH$, we recover \cref{the1}.

\textit{On $F_i=I-i\delta H_i$:} Similarly, when $H=\sum_i H_i$, and set jump operators $F_i=I-i\delta H_i$, we have
\begin{align}
\frac{d\rho}{dt}&=\sum_i\mathcal{D}_{F_i}[\rho]=\sum_i\left(F_i^\dag \rho F_i-\frac{1}{2}\{F_i^\dag F_i,\rho\}\right)\nonumber\\
&=-i\delta \left[\sum_i H_i,\rho\right]+\sum_i\delta^2\mathcal{D}_{H_i}[\rho].
\end{align}
Using \cref{daml}, we have
\begin{align}
\|e^{\sum_i\mathcal{D}_{F_i} t/\delta}-e^{\mathcal{L}_Ht}\|_\diamond\leq t\delta \|\sum_i\mathcal{D}_{H_i}\|_\diamond\leq t\delta\sum_i\|\mathcal{D}_{H_i}\|_\diamond,
\end{align}
which leads to
\begin{align}
T=\mathcal{O}\left(\sum_i\|H_i\|_\infty^2t^2/\epsilon\right).
\end{align}

\textit{On General Lindbladian via dissipation:} To simulate the general Lindbladian
\begin{equation}
\frac{d\rho}{dt}=\mathcal{L}[\rho]=-i[H,\rho]+ \sum_{i} \left(F_i\rho F_i^\dag- \frac{1}{2}\{\rho,F_i^\dag F_i\}\right),
\end{equation}
via dissipation, we construct a purely dissipative Lindbaldian $\mathcal{L}'$ with jumps $F'_0=I-i\delta H$ and $F_i'=\sqrt{\delta}F_i$. Since we have
\begin{align}
\frac{d\rho}{dt}&=\mathcal{L}'[\rho]=F_0'^\dag \rho F_0'-\frac{1}{2}\{F_0'^\dag F_0',\rho\}+\sum_i\left(F_i'^\dag \rho F_i'-\frac{1}{2}\{F_i'^\dag F_i',\rho\}\right)\nonumber\\
&=-i\delta [H,\rho]+\delta\sum_i\left(F_i^\dag \rho F_i-\frac{1}{2}\{F_i^\dag F_i,\rho\}\right)+\delta^2\mathcal{D}_{H}[\rho]\nonumber\\
&=\delta\mathcal{L}[\rho]+\delta^2\mathcal{D}_{H}[\rho],
\end{align}
therefore, following the similar procedures as above, we obtain
\begin{align}
T=\mathcal{O}\left(\|H\|_\infty^2t^2/\epsilon\right).
\end{align}

\section{Proof of \cref{the2}}
\label{sec:uniform_time_lb}

In this note we prove a rigorous lower bound under the \emph{stronger} simulation requirement that the dissipative evolution must approximate the target Hamiltonian evolution \emph{for every intermediate time} between $0$ and $t$, not just at the final time. The result is proved for a simple qubit family, which is enough to establish the desired $\Omega(t^2/\epsilon)$ scaling in the worst-case sense.

\paragraph{Simulation notion.}
Let
\begin{equation}
\mathcal{L}_D(\rho)=\sum_{j=1}^m\left(F_j\rho F_j^\dagger-\tfrac12\{F_j^\dagger F_j,\rho\}\right),
\qquad \|F_j\|_\infty\le C,
\end{equation}
be a purely dissipative Lindbladian on a qubit, and let $\Phi_\tau=e^{\tau\mathcal{L}_D}$. For a target Hamiltonian $H$, define the unitary channel
\begin{equation}
\mathcal{U}_s(\rho)=e^{-iHs}\rho e^{iHs}.
\end{equation}
We say that $\Phi_\tau$ simulates $\mathcal{U}_s$ up to time $t$ in total dissipative time $T$ if
\begin{equation}
\sup_{0\le s\le t}\left\|\Phi_{\frac{T}{t}s}-\mathcal{U}_s\right\|_\diamond\le \epsilon.
\label{eq:uniformsim}
\end{equation}
This is the natural uniform-in-time strengthening of the endpoint condition and matches the linear time reparametrization used by the constructive upper bound.

\begin{theorem}[Uniform-in-time lower bound for a hard qubit family]
\label{thm:uniform_qubit_lb}
Let
\begin{equation}
H=\frac{\nu}{2}\sigma_z,
\qquad \Theta:=|\nu| t.
\end{equation}
Assume that \eqref{eq:uniformsim} holds and that
\begin{equation}
0<\epsilon\le \frac{1}{100}\min\{1,\Theta\}.
\label{eq:eps_assumption}
\end{equation}
Then
\begin{equation}
T\ge \frac{\Theta^2}{384\,mC^2\,\epsilon}
=\frac{\nu^2 t^2}{384\,mC^2\,\epsilon}.
\label{eq:main_uniform_lb}
\end{equation}
Equivalently, since $\|H\|_\infty=|\nu|/2$,
\begin{equation}
T\ge \frac{\|H\|_\infty^2 t^2}{96\,mC^2\,\epsilon}.
\end{equation}
In particular, for any fixed-norm hard family with $\|H\|_\infty=\Theta(1)$, uniform-in-time pure-dissipative simulation requires
\begin{equation}
T=\Omega(t^2/\epsilon).
\end{equation}
\end{theorem}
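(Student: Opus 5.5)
The plan is to reduce everything to the affine Bloch equation $\dot r=Br+c$ with $B=S+[\boldsymbol\omega\times]$, as sketched in the main text. We then prove the three inequalities \cref{maine1}, \cref{maine2}, \cref{maine3} separately and chain them. The constants are arranged so that
\begin{equation*}
\frac{\Theta^2}{64T^2}\le|\boldsymbol\omega|^2\le mC^2(-\Tr S)\le \frac{6mC^2\epsilon}{T},
\end{equation*}
which rearranges to $T\ge \Theta^2/(384\,mC^2\epsilon)$, i.e. \eqref{eq:main_uniform_lb}.

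\textbf{Step 1 (algebraic trade-off \cref{maine1}).} Write each jump as $F_j=a_jI+\mathbf b_j\cdot\boldsymbol\sigma$ with $a_j\in\mathbb C$ and $\mathbf b_j\in\mathbb C^3$. The Pauli coefficients satisfy $|a_j|\le\|F_j\|_\infty\le C$, since $a_j=\Tr F_j/2$.

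By the gauge computation of Implication 4, $\mathcal D_{F_j}=-i[H_j,\cdot\,]+\mathcal D_{\mathbf b_j\cdot\boldsymbol\sigma}$ with $H_j=\tfrac{i}{2}(a_j^*G_j-a_jG_j^\dagger)$, where $G_j=\mathbf b_j\cdot\boldsymbol\sigma$. Hence the rotation vector is linear in $\mathrm{Im}(a_j^*\mathbf b_j)$, giving $|\boldsymbol\omega|\le 2\sum_j|a_j||\mathbf b_j|$.

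Expanding $\mathcal D_{G_j}$ in the Pauli basis gives a symmetric contraction $S_j=-2(|\mathbf b_j|^2I-\mathrm{Re}\,\mathbf b_j\mathbf b_j^\dagger)$. Its trace is $-\Tr S_j=4|\mathbf b_j|^2$. The piece $i\,\mathbf b_j\times\mathbf b_j^*$ lands only in the drift $c$. Any antisymmetric piece of $\mathcal D_{G_j}$ must be checked to vanish, or else absorbed into $\boldsymbol\omega$ with a bound of the same form. Cauchy--Schwarz then yields
\begin{equation*}
|\boldsymbol\omega|^2\le 4\Big(\sum_j|a_j|^2\Big)\Big(\sum_j|\mathbf b_j|^2\Big)\le mC^2(-\Tr S).
\end{equation*}
The same computation gives $|c|\lesssim\sum_j|\mathbf b_j|^2\lesssim -\Tr S$, which is used below.

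\textbf{Step 2 (decoherence budget \cref{maine2}).} Diamond error $\epsilon$ implies trace distance at most $\epsilon$ for every input. For a pure input, the ideal output is pure, so $|r(\tau)|\ge1-2\epsilon$ for all $\tau\in[0,T]$.

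Run the six pure inputs $\pm e_k$ and sum $\tfrac{d}{d\tau}|r|^2=2r^\top Sr+2c\cdot r$ over them. By affinity, $\sum_\pm r_{\pm e_k}(\tau)=2v(\tau)$, where $v$ is the evolution of the maximally mixed state, and $|v|\le 2\epsilon$ by the same diamond bound. Because each $r_{\pm e_k}(\tau)$ is within $2\epsilon$ of a rotated orthonormal frame, $\sum r^\top Sr=2\Tr S+O(\epsilon\,|\Tr S|)$.

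Integrating over $[0,T]$ and using $|r(T)|^2\ge 1-4\epsilon$ gives
\begin{equation*}
-4T\,\Tr S\,(1-O(\epsilon))\le 24\epsilon+O(\epsilon\,T\,|\Tr S|).
\end{equation*}
With $\epsilon\le1/100$ this closes to $-\Tr S\le 6\epsilon/T$. The cross term $c\cdot v$ is handled by $|c|\lesssim-\Tr S$ from Step 1.

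\textbf{Step 3 (speed requirement \cref{maine3}), expected main obstacle.} Take the pure input $e_x$. The target trajectory rotates about $z$ through total angle $\Theta$, so uniform tracking forces $r(\tfrac{T}{t}s)$ to stay within $2\epsilon$ of the target point for every $s$. To avoid wrap-around cheating, split $[0,t]$ into $\lceil\Theta\rceil$ subintervals, each of target angle at most $1$. On each subinterval the chord length is at least $2\sin(\cdot/2)$ of its angle, and the true displacement exceeds that chord minus $4\epsilon$. Summing, and using $\epsilon\le\Theta/100$ to absorb the error, the dissipative path length $l_D$ is at least of order $\Theta/2$ (in fact $\ge\Theta/4$ with care).

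On the other hand $|\dot r|\le|\boldsymbol\omega|+\|S\|+|c|\le|\boldsymbol\omega|+O(\epsilon/T)$ by Step 2, so $l_D\le T|\boldsymbol\omega|+O(\epsilon)$. This gives $|\boldsymbol\omega|\ge\Theta/(8T)$ after absorbing the $O(\epsilon)$ term via \eqref{eq:eps_assumption}. The delicate points are the chord-versus-arc bookkeeping across subintervals and keeping the constants exactly $6$ and $8$; if they slip, only the numerical prefactor changes, not the scaling.
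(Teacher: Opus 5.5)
Your proposal is correct, but the decoherence budget goes by a different route from the paper. Step 1 is the paper's computation: your $\mathbf b_j$ is its $\mathbf u_j$. The hedge there can be dropped, since $\mathrm{Re}(\mathbf u_j\mathbf u_j^\dagger)$ is symmetric and $2i\,\mathbf u_j\times\mathbf u_j^*$ is pure drift, so $\mathcal D_{G_j}$ adds nothing to $\boldsymbol\omega$.

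\textbf{The paper's route.} It first applies the $\pm\mathbf r$ trick to get $\|A_\tau-R_z(\alpha\tau)\|_\infty\le\epsilon$ and $|\mathbf b_\tau|\le\epsilon$ for the linear and translation parts. It then uses the exact identity $\det e^{TB}=e^{T\Tr S}$ with Weyl's inequality at the single time $\tau=T$, which gives $-\Tr S\le\frac{3}{T}\log\frac{1}{1-\epsilon}\le 6\epsilon/T$ in one line. Its path-length step runs on the linear orbit $A_\tau\mathbf e_x$, paying $|A_\tau\mathbf e_x|\le 1+\epsilon$, so the drift $\mathbf c$ is never estimated.

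\textbf{Your route.} Your purity budget differentiates $\sum_{\pm,k}|\mathbf r_{\pm\mathbf e_k}|^2=2\|A_\tau\|_F^2+6|\mathbf b_\tau|^2$. Its derivative equals $4\Tr S$ only up to $O(\epsilon|\Tr S|)$ corrections. So you genuinely need closeness at every intermediate time, together with the drift bound $|\mathbf c|\le 2\sum_j|\mathbf u_j|^2=\tfrac12(-\Tr S)$, which does hold. In return, the argument has a transparent purity-loss reading. Your Step 3 also works with the physical trajectory, where $|\mathbf r|\le 1$, and uses a single chord construction with $\lceil\Theta\rceil$ equal arcs that covers both regimes of $\Theta$.

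\textbf{Constants.} Use the paper's normalization $|\mathbf r-\mathbf r_{\mathrm{ideal}}|=\|\rho-\rho_{\mathrm{ideal}}\|_1\le\epsilon$. Then your bookkeeping gives $-\Tr S\le 3\epsilon/((1-7.5\epsilon)T)<6\epsilon/T$. With your looser $2\epsilon$ you only get $-\Tr S\le 6\epsilon(1+O(\epsilon))/T$, so ``closes to $6\epsilon/T$'' is not literally true there. This does not hurt the theorem: your path-length step actually yields $|\boldsymbol\omega|\gtrsim 0.77\,\Theta/T$, far above $\Theta/(8T)$, so the stated constant $384$ still follows with room to spare.
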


\begin{proof}
We write the qubit state as
\begin{equation}
\rho=\frac12(I+\mathbf{r}\cdot\boldsymbol{\sigma}),
\qquad \mathbf{r}\in\mathbb{R}^3,
\end{equation}
and represent channels in Bloch form.

\paragraph{Step 1: Bloch form of a purely dissipative qubit generator.}
Write each jump operator as
\begin{equation}
F_j=a_j I+\mathbf{u}_j\cdot\boldsymbol{\sigma},
\qquad a_j\in\mathbb{C},\quad \mathbf{u}_j\in\mathbb{C}^3.
\end{equation}
A standard Pauli-basis calculation gives the affine Bloch equation~\cite{nielsen2010quantum}
\begin{equation}
\dot{\mathbf{r}}=B\mathbf{r}+\mathbf{c},
\qquad B=S+[\boldsymbol{\omega}\times],
\label{eq:bloch_generator}
\end{equation}
where $S=S^\top$ is real symmetric, $[\boldsymbol{\omega}\times]$ is the real antisymmetric matrix for the cross product by $\boldsymbol{\omega}$, and
\begin{align}
S&=2\sum_{j=1}^m\Bigl(\re(\mathbf{u}_j\mathbf{u}_j^\dagger)-|\mathbf{u}_j|^2 I_3\Bigr),
\label{eq:S_formula}\\
\boldsymbol{\omega}&=-2\sum_{j=1}^m \im(a_j^*\mathbf{u}_j).
\label{eq:omega_formula}
\end{align}
The translation vector $\mathbf{c}$ will not be needed explicitly.

Two consequences are immediate.
First, $S\preceq 0$. Indeed, for every real $\mathbf{x}\in\mathbb{R}^3$,
\begin{align}
\mathbf{x}^\top S\mathbf{x}
&=2\sum_{j=1}^m\Bigl(|\mathbf{x}\cdot\mathbf{u}_j|^2-|\mathbf{u}_j|^2|\mathbf{x}|^2\Bigr)\le 0.
\end{align}
Hence
\begin{equation}
\|S\|_\infty\le -\Tr(S).
\label{eq:Snorm_trace}
\end{equation}
Second, taking the trace in \eqref{eq:S_formula} gives
\begin{equation}
-\Tr(S)=4\sum_{j=1}^m |\mathbf{u}_j|^2.
\label{eq:traceS_formula}
\end{equation}
Using $|a_j|\le \|F_j\|_\infty\le C$ and Cauchy--Schwarz,
\begin{align}
|\boldsymbol{\omega}|
&\le 2\sum_{j=1}^m |a_j|\,|\mathbf{u}_j|
\le 2C\sqrt{m\sum_{j=1}^m |\mathbf{u}_j|^2} \\
&= C\sqrt{m(-\Tr S)},
\end{align}
where the last equality uses \eqref{eq:traceS_formula}. Thus
\begin{equation}
|\boldsymbol{\omega}|^2\le mC^2(-\Tr S).
\label{eq:omega_vs_traceS}
\end{equation}

\paragraph{Step 2: Uniform channel closeness implies uniform Bloch-matrix closeness.}
Let $A_\tau=e^{\tau B}$ denote the linear part of $\Phi_\tau$ in Bloch form, so that
\begin{equation}
\mathbf{r}\mapsto A_\tau\mathbf{r}+\mathbf{b}_\tau.
\end{equation}
The target channel generated by $H=\frac{\nu}{2}\sigma_z$ acts on Bloch vectors by the rotation
\begin{equation}
R_z(\alpha\tau),
\qquad \alpha:=\frac{\nu t}{T}=\frac{\Theta}{T}.
\end{equation}
Since the trace distance between qubit states equals the Euclidean distance between their Bloch vectors, \eqref{eq:uniformsim} implies that for every unit vector $\mathbf{r}\in\mathbb{R}^3$ and every $\tau\in[0,T]$,
\begin{equation}
\bigl|A_\tau\mathbf{r}+\mathbf{b}_\tau-R_z(\alpha\tau)\mathbf{r}\bigr|\le \epsilon.
\label{eq:statewise_bloch_close}
\end{equation}
Applying \eqref{eq:statewise_bloch_close} to both $\mathbf{r}$ and $-\mathbf{r}$ and adding/subtracting the two inequalities yields
\begin{equation}
\|A_\tau-R_z(\alpha\tau)\|_\infty\le \epsilon,
\qquad
|\mathbf{b}_\tau|\le \epsilon,
\qquad 0\le \tau\le T.
\label{eq:linear_and_translation_close}
\end{equation}

\paragraph{Step 3: Uniform closeness forces very small total contraction.}
Taking $\tau=T$ in \eqref{eq:linear_and_translation_close}, Weyl's inequality~\cite{roger1994topics} for singular values implies that every singular value of $A_T$ lies in $[1-\epsilon,1+\epsilon]$, because all singular values of the rotation $R_z(\Theta)$ are equal to $1$. Therefore
\begin{equation}
\det(A_T)\ge (1-\epsilon)^3.
\end{equation}
Since $A_T=e^{TB}$ and $\Tr([\boldsymbol{\omega}\times])=0$,
\begin{equation}
\det(A_T)=e^{T\Tr(B)}=e^{T\Tr(S)}.
\end{equation}
Hence, for $\epsilon\le 1/2$,
\begin{equation}
-\Tr(S)
\le \frac{3}{T}\log\frac{1}{1-\epsilon}
\le \frac{6\epsilon}{T}.
\label{eq:traceS_upper}
\end{equation}
By \eqref{eq:Snorm_trace},
\begin{equation}
\|S\|_\infty\le \frac{6\epsilon}{T}.
\label{eq:Snorm_upper}
\end{equation}

\paragraph{Step 4: Uniform closeness to the rotating orbit forces a macroscopic path length.}
Let
\begin{equation}
\mathbf{u}(\tau):=R_z(\alpha\tau)\mathbf{e}_x,
\qquad
\mathbf{v}(\tau):=A_\tau\mathbf{e}_x,
\end{equation}
where $\mathbf{e}_x=(1,0,0)^\top$. By \eqref{eq:linear_and_translation_close},
\begin{equation}
|\mathbf{v}(\tau)-\mathbf{u}(\tau)|\le \epsilon,
\qquad 0\le \tau\le T.
\label{eq:v_close_u}
\end{equation}
We claim that the Euclidean arc length $L(\mathbf{v})$ of $\mathbf{v}(\tau)$ satisfies
\begin{equation}
L(\mathbf{v})\ge \frac{\Theta}{5}.
\label{eq:pathlength_lower}
\end{equation}
Indeed:

\emph{Case 1: $0<\Theta\le 1$.}
Since length dominates endpoint distance,
\begin{align}
L(\mathbf{v})
&\ge |\mathbf{v}(T)-\mathbf{v}(0)|
\ge |\mathbf{u}(T)-\mathbf{u}(0)|-2\epsilon \\
&=2\sin(\Theta/2)-2\epsilon
\ge \Theta/2-2\epsilon
\ge \Theta/4,
\end{align}
where in the last step we used \eqref{eq:eps_assumption}.

\emph{Case 2: $\Theta\ge 1$.}
Let $N=\lfloor\Theta\rfloor$, so $N\ge \Theta/2$. For $k=0,1,\dots,N$, define $\tau_k=k/\alpha$. Then consecutive target points are separated by one radian, hence
\begin{equation}
|\mathbf{u}(\tau_{k+1})-\mathbf{u}(\tau_k)|=2\sin(1/2)>0.95.
\end{equation}
Using \eqref{eq:v_close_u} and \eqref{eq:eps_assumption},
\begin{equation}
|\mathbf{v}(\tau_{k+1})-\mathbf{v}(\tau_k)|
\ge |\mathbf{u}(\tau_{k+1})-\mathbf{u}(\tau_k)|-2\epsilon
>0.93.
\end{equation}
Therefore
\begin{equation}
L(\mathbf{v})\ge \sum_{k=0}^{N-1}|\mathbf{v}(\tau_{k+1})-\mathbf{v}(\tau_k)|
>0.93 N
\ge 0.46\Theta
>\frac{\Theta}{5}.
\end{equation}
This proves \eqref{eq:pathlength_lower}.

\paragraph{Step 5: Convert path length into a lower bound on the effective rotation rate.}
Since $\dot{\mathbf{v}}(\tau)=B\mathbf{v}(\tau)$, we have
\begin{align}
L(\mathbf{v})
&=\int_0^T |\dot{\mathbf{v}}(\tau)|\,d\tau
\le \int_0^T \|B\|_\infty\,|\mathbf{v}(\tau)|\,d\tau \\
&\le (1+\epsilon)T\bigl(|\boldsymbol{\omega}|+\|S\|_\infty\bigr),
\label{eq:pathlength_upper}
\end{align}
where we used $\|[\boldsymbol{\omega}\times]\|_\infty=|\boldsymbol{\omega}|$ and $|\mathbf{v}(\tau)|\le |\mathbf{u}(\tau)|+\epsilon=1+\epsilon$.
Combining \eqref{eq:pathlength_lower}, \eqref{eq:pathlength_upper}, and \eqref{eq:Snorm_upper}, we obtain
\begin{equation}
|\boldsymbol{\omega}|
\ge \frac{\Theta}{5(1+\epsilon)T}-\frac{6\epsilon}{T}.
\end{equation}
Under \eqref{eq:eps_assumption}, the second term is at most $0.06\,\Theta/T$ and the first term is at least $0.19\,\Theta/T$, so in particular
\begin{equation}
|\boldsymbol{\omega}|\ge \frac{\Theta}{8T}.
\label{eq:omega_lower}
\end{equation}

\paragraph{Step 6: Finish the lower bound.}
Combining \eqref{eq:omega_lower}, \eqref{eq:omega_vs_traceS}, and \eqref{eq:traceS_upper}, we find
\begin{equation}
\frac{\Theta^2}{64T^2}
\le |\boldsymbol{\omega}|^2
\le mC^2(-\Tr S)
\le \frac{6mC^2\epsilon}{T}.
\end{equation}
Rearranging gives
\begin{equation}
T\ge \frac{\Theta^2}{384\,mC^2\,\epsilon},
\end{equation}
which is exactly \eqref{eq:main_uniform_lb}.
\end{proof}

\section{Uniqueness of construction for smooth Lindbladians}\label{app:uniqueness}
In this section we discuss to what extent our construction is unique, showing that any smooth family of semigroups that can approximate Hamiltonian dynamics to arbitrary precision (potentially at the cost of increased simulation time), must have a form similar to our evolution up to second order.
\begin{prop}[Uniqueness up to $O(\delta^2)$]\label{prop:rigidity}
Let $\{\mathcal{L}_{D,\delta}\}_{0\le \delta<\delta_0}$ be a smooth family of purely dissipative Lindbladians on a finite-dimensional system, and let
\begin{equation}
\mathcal{L}_H(\rho):=-i[H,\rho].
\end{equation}
Assume that there exists $t_0>0$ such that
\begin{equation}
\varepsilon(\delta):=
\sup_{0\le s\le t_0}
\left\|
e^{\frac{s}{\delta}\mathcal{L}_{D,\delta}}
-
e^{s\mathcal{L}_H}
\right\|_\diamond
\xrightarrow[\delta\to 0]{}0.
\label{eq:rigidity_uniform_assumption}
\end{equation}
Then
\begin{equation}
\mathcal{L}_{D,\delta}
=
\delta \mathcal{L}_H + O(\delta^2)
\label{eq:rigidity_generator}
\end{equation}
in diamond norm. In particular, the first-order term is uniquely fixed by $H$. Equivalently, if $\widetilde{\mathcal{L}}_{D,\delta}$ is another smooth purely dissipative family satisfying \eqref{eq:rigidity_uniform_assumption} for the same Hamiltonian $H$, then
\begin{equation}
\mathcal{L}_{D,\delta}-\widetilde{\mathcal{L}}_{D,\delta}=O(\delta^2).
\end{equation}

Moreover, assume that for some fixed $m$ one can choose jump operators smoothly so that
\begin{equation}
\mathcal{L}_{D,\delta}(\rho)
=
\sum_{j=1}^m
\left(
F_{j,\delta}\rho F_{j,\delta}^\dag
-\frac12\{F_{j,\delta}^\dag F_{j,\delta},\rho\}
\right),
\qquad
F_{j,\delta}=F_{j,0}+\delta G_j+O(\delta^2).
\end{equation}
Then $F_{j,0}=c_j I$ for some scalars $c_j$, and
\begin{equation}
H
=
\sum_{j=1}^m
\frac{c_j G_j^\dag-c_j^*G_j}{2i}.
\label{eq:rigidity_first_order_jump_constraint}
\end{equation}
Hence, modulo the usual Lindblad gauge freedom and up to $O(\delta^2)$, every nontrivial first-order jump has the form
\begin{equation}
F_{j,\delta}
=
\sqrt{\gamma_j}\bigl(I-\delta A_j\bigr)+O(\delta^2),
\qquad
\sum_j \gamma_j\frac{A_j-A_j^\dag}{2i}=H,
\label{eq:rigidity_jump_normal_form}
\end{equation}
while jumps with $F_{j,0}=0$ contribute only at order $O(\delta^2)$.
\end{prop}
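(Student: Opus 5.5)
The plan is to expand the smooth family as $\mathcal{L}_{D,\delta}=\mathcal{L}_0+\delta\mathcal{L}_1+\delta^2\mathcal{R}(\delta)$, with $\mathcal{R}(\delta)$ bounded in diamond norm for small $\delta$. Here $\mathcal{L}_0=\mathcal{L}_{D,0}$ is itself a purely dissipative Lindbladian. The exponent in \eqref{eq:rigidity_uniform_assumption} then reads $\frac{s}{\delta}\mathcal{L}_{D,\delta}=\frac{s}{\delta}\mathcal{L}_0+s\mathcal{L}_1+O(\delta)$. The proof has three parts: show $\mathcal{L}_0=0$, then identify $\mathcal{L}_1=\mathcal{L}_H$, then read off the constraints on the jumps.

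\emph{Step 1: $\mathcal{L}_0=0$.} This is the step I expect to be the main obstacle. Take a fixed $s\in(0,t_0]$. The channel $e^{s\mathcal{L}_H}$ is unitary, so all its eigenvalues have modulus one. Suppose $\mathcal{L}_0$ has an eigenvalue $\lambda_0$ with $\re\lambda_0<0$. By continuity of eigenvalues in finite dimensions, $\mathcal{L}_{D,\delta}$ has an eigenvalue $\lambda(\delta)=\lambda_0+o(1)$. Then $e^{\frac{s}{\delta}\mathcal{L}_{D,\delta}}$ has an eigenvalue of modulus $e^{s\re\lambda(\delta)/\delta}\to0$. Since eigenvalues depend continuously on the operator, this contradicts $\varepsilon(\delta)\to0$.

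Hence the spectrum of $\mathcal{L}_0$ is purely imaginary. A Lindbladian semigroup is contractive, so $e^{\tau\mathcal{L}_0}$ is bounded for all $\tau\in\mathbb{R}$; this rules out Jordan blocks and means $e^{\tau\mathcal{L}_0}$ is a group of invertible channels whose inverses are also channels. Such channels are unitary, so $\mathcal{L}_0=-i[K,\cdot]$ for some Hermitian $K$. Because $\mathcal{L}_0$ is purely dissipative, the impossibility of \eqref{eqvi} (the cited result~\cite{hayden2022canonical}) forces $[K,\cdot]=0$, that is, $\mathcal{L}_0=0$. The delicate part is making this spectral argument clean, in particular the contractivity argument for excluding nontrivial Jordan blocks on the imaginary axis.

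\emph{Step 2: $\mathcal{L}_1=\mathcal{L}_H$.} With $\mathcal{L}_0=0$, Lemma~\ref{daml} gives
\[
\bigl\|e^{\frac{s}{\delta}\mathcal{L}_{D,\delta}}-e^{s\mathcal{L}_1}\bigr\|_\diamond\le s\delta\sup_{\delta'}\|\mathcal{R}(\delta')\|_\diamond\to0 .
\]
Combined with the assumption, this yields $e^{s\mathcal{L}_1}=e^{s\mathcal{L}_H}$ for every $s\in[0,t_0]$. Differentiating at $s=0$ gives $\mathcal{L}_1=\mathcal{L}_H$, which is \eqref{eq:rigidity_generator}. The uniqueness statement follows by subtracting the expansions of the two families.

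\emph{Step 3: the jump structure.} Expanding $\mathcal{L}_{D,\delta}$ with $F_{j,\delta}=F_{j,0}+\delta G_j+O(\delta^2)$, the zeroth-order term is $\sum_j\mathcal{D}_{F_{j,0}}=\mathcal{L}_0=0$. Write $F_{j,0}=c_jI+\tilde F_j$ with $\tilde F_j$ traceless. The gauge formula shows this sum equals a commutator term plus $\sum_j\mathcal{D}_{\tilde F_j}$. By uniqueness of the GKSL decomposition in a traceless basis, the positive semidefinite Kossakowski matrix built from the $\tilde F_j$ must vanish, so $\tilde F_j=0$ and $F_{j,0}=c_jI$.

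At first order, the cross terms are $\sum_j\bigl(c_j\rho G_j^\dagger+c_j^*G_j\rho-\tfrac12\{c_j^*G_j+c_jG_j^\dagger,\rho\}\bigr)$. This is a pure commutator $-i[K,\rho]$ with $K=\sum_j(c_jG_j^\dagger-c_j^*G_j)/(2i)$. Equating it to $\mathcal{L}_H$ gives \eqref{eq:rigidity_first_order_jump_constraint}, with $K=H$ up to an irrelevant multiple of the identity, which can be absorbed into the scalar parts.

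Finally, for $c_j\neq0$, write $c_j=\sqrt{\gamma_j}e^{i\phi_j}$ and remove the phase $e^{i\phi_j}$, which is a gauge choice. Setting $A_j=-G_j/c_j$ gives \eqref{eq:rigidity_jump_normal_form}. Jumps with $c_j=0$ have $F_{j,\delta}=\delta G_j+O(\delta^2)$, so their contribution $\mathcal{D}_{F_{j,\delta}}$ is only $O(\delta^2)$.
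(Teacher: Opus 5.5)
Your proposal is correct in outline, but Step 1 takes a much heavier route than the paper, and that is where the two proofs genuinely differ.

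The paper gets $\mathcal{L}_{D,0}=0$ from the small-time part of the uniform hypothesis. For fixed $a>0$ and $\delta$ small enough that $s=a\delta\le t_0$, the assumption reads $\|e^{a\mathcal{L}_{D,\delta}}-e^{a\delta\mathcal{L}_H}\|_\diamond\le\varepsilon(\delta)$. Letting $\delta\to0$ gives $e^{a\mathcal{L}_{D,0}}=\mathrm{id}$ for all $a>0$, hence $\mathcal{L}_{D,0}=0$. This uses no spectral perturbation theory and no structure theory of reversible channels. It does not even use pure dissipativity. That enters the paper only in the jump analysis, via $\Tr(\rho\,\mathcal{L}_{D,0}[\rho])=-\sum_j\bigl(\langle\psi|F_{j,0}^\dagger F_{j,0}|\psi\rangle-|\langle\psi|F_{j,0}|\psi\rangle|^2\bigr)=0$ on pure states.

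Your route has the mild virtue of using closeness at only one fixed $s$. But Step 2 needs the whole interval anyway (to differentiate at $s=0$), so nothing is gained overall. In exchange you rely on several facts you assert rather than prove:
\begin{itemize}
\item Contractivity holds only for $\tau\ge0$. You must first rule out Jordan blocks, and only then do you get boundedness for $\tau<0$.
\item That $e^{-\tau\mathcal{L}_0}$ is CPTP needs an almost-periodicity argument: $e^{-\tau\mathcal{L}_0}$ is a limit of $e^{\tau_n\mathcal{L}_0}$ with $\tau_n\to+\infty$, and channels form a closed set.
\item You then need that reversible channels are unitary, and that the generator of a group of unitary channels has the form $-i[K,\cdot]$.
\item Your final appeal to the impossibility of an exact purely dissipative commutator is itself just the variance argument above.
\end{itemize}

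Two smaller points. In Step 2, applying the contraction lemma requires $e^{s\mathcal{L}_1}$ to be contractive. This holds because $\mathcal{L}_1=\lim_{\delta\to0}\delta^{-1}\mathcal{L}_{D,\delta}$ is a limit of Lindbladians, but you should say so. The paper sidesteps the issue with the cruder Duhamel bound $s\,e^{s(\|\delta^{-1}\mathcal{L}_{D,\delta}\|_\diamond+\|\mathcal{L}_1\|_\diamond)}\,\delta\|\mathcal{R}(\delta)\|_\diamond$.

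In Step 3, your GKSL-uniqueness argument for $F_{j,0}=c_jI$ is valid, though less elementary than the variance argument. Your remark that $K=H$ only modulo a real multiple of the identity is more careful than the paper. The paper's displayed formula for $H$ is literally true only up to such a shift.
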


\begin{proof}
Set
\begin{equation}
\Phi_{\delta,s}:=e^{\frac{s}{\delta}\mathcal{L}_{D,\delta}},
\qquad
U_s:=e^{s\mathcal{L}_H}.
\end{equation}

We first show that $\mathcal{L}_{D,0}=0$. Fix any $a>0$. For all sufficiently small $\delta$, we have $a\delta\le t_0$, and therefore \eqref{eq:rigidity_uniform_assumption} implies
\begin{equation}
\left\|
e^{a\mathcal{L}_{D,\delta}}
-
e^{a\delta\mathcal{L}_H}
\right\|_\diamond
\le
\varepsilon(\delta).
\end{equation}
Now let $\delta\to 0$. Since $\mathcal{L}_{D,\delta}$ is smooth at $\delta=0$, we get
\begin{equation}
e^{a\mathcal{L}_{D,0}}=I.
\end{equation}
Since this holds for every $a>0$, differentiating at $a=0$ yields
\begin{equation}
\mathcal{L}_{D,0}=0.
\end{equation}

Because the family is smooth, there exists a bounded superoperator $\mathcal{M}$ such that
\begin{equation}
\mathcal{L}_{D,\delta}
=
\delta \mathcal{M}+O(\delta^2),
\qquad
\mathcal{M}
=
\partial_\delta \mathcal{L}_{D,\delta}\big|_{\delta=0}.
\end{equation}
Define
\begin{equation}
A_\delta:=\delta^{-1}\mathcal{L}_{D,\delta}
=
\mathcal{M}+O(\delta).
\end{equation}
By Duhamel's formula,
\begin{equation}
e^{sA_\delta}-e^{s\mathcal{M}}
=
\int_0^s
e^{(s-u)A_\delta}(A_\delta-\mathcal{M})e^{u\mathcal{M}}
\,du.
\end{equation}
Hence, for $0\le s\le t_0$,
\begin{align}
\|e^{sA_\delta}-e^{s\mathcal{M}}\|_\diamond
&\le
s\,e^{s(\|A_\delta\|_\diamond+\|\mathcal{M}\|_\diamond)}
\|A_\delta-\mathcal{M}\|_\diamond \nonumber\\
&=O(\delta),
\end{align}
uniformly on $[0,t_0]$. Since $e^{sA_\delta}=e^{\frac{s}{\delta}\mathcal{L}_{D,\delta}}$, combining this with \eqref{eq:rigidity_uniform_assumption} gives
\begin{align}
\sup_{0\le s\le t_0}\|e^{s\mathcal{M}}-e^{s\mathcal{L}_H}\|_\diamond\leq \|e^{s\mathcal{M}}-e^{\frac{s}{\delta}\mathcal{L}_{D,\delta}}\|_\diamond+\|e^{\frac{s}{\delta}\mathcal{L}_{D,\delta}}-e^{s\mathcal{L}_H}\|_\diamond=O(\delta+\epsilon(\delta)).
\end{align}
Letting $\delta\to0$ we conclude that:
\begin{equation}
\sup_{0\le s\le t_0}
\|e^{s\mathcal{M}}-e^{s\mathcal{L}_H}\|_\diamond
=0.
\end{equation}
Therefore
\begin{equation}
e^{s\mathcal{M}}=e^{s\mathcal{L}_H}
\qquad\text{for all }0\le s\le t_0.
\end{equation}
Differentiating at $s=0$ yields
\begin{equation}
\mathcal{M}=\mathcal{L}_H.
\end{equation}
This proves \eqref{eq:rigidity_generator}.

We now prove the jump statement. Write
\begin{equation}
\mathcal{D}_F(\rho):=F\rho F^\dag-\frac12\{F^\dag F,\rho\},
\qquad
\mathcal{L}_{D,\delta}=\sum_{j=1}^m \mathcal{D}_{F_{j,\delta}}.
\end{equation}
Since $\mathcal{L}_{D,0}=0$, we have
\begin{equation}
0=\mathcal{L}_{D,0}=\sum_{j=1}^m \mathcal{D}_{F_{j,0}}.
\end{equation}
Let $\rho=|\psi\rangle\langle\psi|$ be pure. Then
\begin{align}
0
&=
\Tr(\rho\,\mathcal{L}_{D,0}[\rho]) \nonumber\\
&=
-\sum_{j=1}^m
\left(
\langle\psi|F_{j,0}^\dag F_{j,0}|\psi\rangle
-
|\langle\psi|F_{j,0}|\psi\rangle|^2
\right).
\end{align}
Each summand is nonnegative, so every summand must vanish for every $|\psi\rangle$. Hence each $F_{j,0}$ has zero variance on every pure state, which implies
\begin{equation}
F_{j,0}=c_j I
\end{equation}
for some scalar $c_j$.

Now write
\begin{equation}
F_{j,\delta}=c_j I+\delta G_j+O(\delta^2).
\end{equation}
A direct expansion gives
\begin{align}
\mathcal{D}_{F_{j,\delta}}[\rho]
&=
\delta
\left(
c_j^*G_j\rho+c_j\rho G_j^\dag
-\frac12\{c_jG_j^\dag+c_j^*G_j,\rho\}
\right)
+O(\delta^2)
\nonumber\\
&=
-i\delta
\left[
\frac{c_jG_j^\dag-c_j^*G_j}{2i},
\rho
\right]
+O(\delta^2).
\label{eq:one_jump_rigidity_expansion}
\end{align}
Summing \eqref{eq:one_jump_rigidity_expansion} over $j$ and comparing with \eqref{eq:rigidity_generator} proves
\begin{equation}
H
=
\sum_{j=1}^m
\frac{c_jG_j^\dag-c_j^*G_j}{2i},
\end{equation}
which is \eqref{eq:rigidity_first_order_jump_constraint}.

Finally, if $c_j\neq 0$, write
\begin{equation}
G_j=-c_jA_j,
\end{equation}
so that
\begin{equation}
F_{j,\delta}=c_j(I-\delta A_j)+O(\delta^2).
\end{equation}
The overall phase of $c_j$ is irrelevant because $\mathcal{D}_{e^{i\theta}F}=\mathcal{D}_F$. Thus, after removing the phase and defining $\gamma_j:=|c_j|^2$, we may rewrite
\begin{equation}
F_{j,\delta}=\sqrt{\gamma_j}(I-\delta A_j)+O(\delta^2).
\end{equation}
Then \eqref{eq:rigidity_first_order_jump_constraint} becomes
\begin{equation}
\sum_j \gamma_j\frac{A_j-A_j^\dag}{2i}=H.
\end{equation}
If $c_j=0$, then $F_{j,\delta}=O(\delta)$ and hence
\begin{equation}
\mathcal{D}_{F_{j,\delta}}=O(\delta^2),
\end{equation}
so such jumps do not contribute at first order. This proves \eqref{eq:rigidity_jump_normal_form}.
\end{proof}

\end{appendix}
\end{document}